\newcommand{\gb}{\quad\colorbox{green}}
\newenvironment{redtext}{\color{red}}
{\ignorespacesafterend}
\newenvironment{bluetext}{\color{blue}}{\ignorespacesafterend}
\newenvironment{magentatext}{\color{magenta}}{\ignorespacesafterend}
\newenvironment{cyantext}{\color{cyan}}{\ignorespacesafterend}
\newenvironment{orangetext}{\color{orange}}
{\ignorespacesafterend}
\newcommand{\bmagenta}{\begin{magentatext}}
	\newcommand{\emagenta}{\end{magentatext}}
\newcommand{\bcyan}{\begin{cyantext}}
	\newcommand{\ecyan}{\end{cyantext}}
\newcommand{\bblue}{\begin{bluetext}}
	\newcommand{\eblue}{\end{bluetext}}
\newcommand{\bred}{\begin{redtext}}
	\newcommand{\ered}{\end{redtext}}
\newcommand{\borange}{\begin{orangetext}}
	\newcommand{\eorange}{\end{orangetext}}
\numberwithin{equation}{section}
\let\ssection=\section
\renewcommand{\section}{\setcounter{equation}{0}\ssection}
\newcommand{\beq}{\begin{equation}}
	\newcommand{\eeq}{\end{equation}}
\def\s.t.{{\quad\text{\small such that}\quad}}
\newcommand{\Tr}{\mathrm{Tr}}
\def\smallover\#1/\#2{\hbox{$\textstyle\frac{\#1}{\#2}$}} %
\newtheorem{theorem}{Theorem}
\newtheorem{proposition}{Proposition}
\def\benu{\begin{enumerate}}
	\def\eenu{\end{enumerate}}
\def\bitem{\begin{itemize}}
	\def\eitem{\end{itemize}}
\def\beq{\begin{equation}}
	\def\eeq{\end{equation}}
\def\beqa{\begin{eqnarray}}
	\def\eeqa{\end{eqnarray}}
\def\barray{\left(\begin{array}}
	\def\earray{\end{array}\right)}
\def\barraynb{\begin{array}}
	\def\earraynb{\end{array}}
\def\?{{\quad\gb{\fbox{\texttt{?}}\;}}\quad}
\def\v0{\mathbf{0}}
\def\smallover#1/#2{\hbox{$\textstyle\frac{#1}{#2}$}} %
\def\smallcirc{{\raise 0.5pt \hbox{$\scriptstyle\circ$}}}
\def\cabove(#1){\stackrel{\smallcirc}{#1}}
\def\ccabove(#1){\stackrel{\smallcirc\smallcirc}{#1}}
\def\cccabove(#1){\stackrel{\,\smallcirc\smallcirc\smallcirc}{#1}\,}
\def\2{{\smallover1/2}}
\let\ssection=\section
\renewcommand{\section}
{\setcounter{equation}{0}\ssection}
\def\besub{\begin{subequations}}
	\def\esub{\end{subequations}}
\begin{document}

\preprint{arXiv: 2312.16133v3 [hep-th]}

\title{The non-Abelian Aharonov-Bohm-effect}

\author{
	P.~A. Horv\'athy
	\footnote{Originally appeared as 
	BI-TP-82/14, available in inspire (code {Horvathy:1982fx}). The author's present address is~: Institut Denis Poisson, Tours University -- Orl\'eans University, UMR 7013 (France). mailto: horvathy@univ-tours.fr}
}

\affiliation{  
	Fakult\"{a}t f\"{u}r Physik	
	Universit\"{a}t Bielefeld \\
		D-4800 Bielefeld 1, Fed.Rep. Germany \\
		and \\
		Centre de Physique Th\'{e}orique, Luminy, \\
		C.N.R.S. Marseille (France)\\
}

\begin{abstract}

The scattering of a nucleon beam around a cylinder containing a non-Abelian flux is studied. We confirm all the previsions of Wu and Yang. We consider the generalization to the gauge group $SU(N)$, and derive a classification scheme. Isospin precession is recovered also at the classical limit.

\end{abstract}

\maketitle

\tableofcontents

\section{Introduction}
The generalized Bohm-Aharonov (BA) experiment has been proposed by Wu and Yang \cite{c1} to test the existence of Yang-Mills (Y-M) fields. They suggest to scatter a nucleon beam around a thin solenoid containing a non-Abelian flux. Although the particles do not penetrate into the cylinder, there will be an interaction between the nucleons and the gauge potential. The interference pattern we observe on the screen behind the cylinder will be characteristic of this interaction.

The main conceptual as well as experimental difficulty with this non-Abelian BA effect concerns the creation of a Yang-Mills field with vanishing field strength outside of a cylinder. We are unable to clarify this point. Thus we propose instead to describe the scattering \underline{supposed} such a field has been created somehow.

In order to simplify the problem we use a number of approximations:
\begin{itemize}
	\item We freeze out the dynamical degrees of freedom of the Yang-Mills field. Hence we study the motion of a test particle in an external Y-M field, rather than solve coupled field equations.	
	\item We suppose our particles to be spinless.
	\item We describe our particles non-relativistically. Thus we use the Schr\"{o}dinger rather than the Klein-Gordon or Dirac equation.
	\item The electromagnetic interaction is not taken into account. We focus our attention on strong interactions.
\end{itemize}

Firstly we discuss all Yang-Mills fields $A_{\mu}$ with $F_{\mu v}=0$ outside of a cylinder. They are characterized by the non-integrable phase factor \cite{c1,c9} of Wu and Yang. We show that there exists a specific gauge which we call the ``diagonal gauge'' such that the $SU(N)$-vacuum splits to $N$ independent 
``electromagnetic'' BA-vacua. In this gauge the Schr\"{o}dinger equation is solved immediately by mere application of the known results from the electromagnetic case \cite{c2,c4,c5}. In particular, the $S$-matrix is constructed and this allows to prove all heuristic previsions of Wu and Yang.

Next we study the scattering of elementary particles with some generic internal structure on $SU(N)$ fluxes. Surprisingly it may happen that the outcome of the experiment is the same in gauge-inequivalent fields (cf. Table \ref{Table1}).

It is interesting to notice that a ``remnant'' of the quantum effect is found also at the classical level. In particular, Wong's equations \cite{c3} are solved explicitly and the relation of the solution to the quantum results is explained.

\section{Yang-Mills vacua in the generalized Bohm-Aharonov experiment}

Let $A_{\mu}$ be a Yang-Mills (Y-M) potential for an $SU(N)$ Y-M field whose field strength tensor vanishes outside of a thin solenoid along the $z$ axis. Dropping the $z$ variable we work on $M=R^{2} \backslash \{0\}$.

According to Wu and Yang \cite{c1} gauge fields have to be described by the non-integrable phase factor
\begin{eqnarray}
\phi(\gamma)=\exp\left(\int_{\gamma} A_{\mu} d x^{\mu}\right). \label{2.1}
\end{eqnarray}
Now we show this explicitely. Note first that the value of \eqref{2.1} is the same for homotopic paths. Let us denote the $SU(N)$ element associated to the loop which circles once around the origin in counter-clockwise direction simply by $\phi$. Under a local gauge transformation defined by the $SU(N)$-valued function $g(x)$, $\phi$ transforms according to $\phi \rightarrow g^{-1}\left(x_{0}\right) \phi g\left(x_{0}\right)$, where $x_{0}$ is the initial and endpoint of the loop $\gamma$. Conversely, \cite{c6}
\begin{theorem}
	Two Y-M vacua $A^{(1)}$ and $A^{(2)}$ are gauge-equivalent if there exists a fixed element $h \in S U(N)$ such that
	\begin{eqnarray}
	\phi_{1}=h^{-1} \phi_{2} h. \label{2.2}
	\end{eqnarray}
\end{theorem}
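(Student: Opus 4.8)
The plan is to prove the converse by building the gauge transformation explicitly out of the conjugating element $h$. Since both potentials are flat ($F_{\mu\nu}=0$) on $M=\mathbb{R}^2\setminus\{0\}$, each becomes pure gauge once pulled back to the universal cover $\tilde M$, which I realize as the slit plane parametrized by $r>0$, $\theta\in\mathbb{R}$. Thus I can find smooth maps $g_1,g_2\colon\tilde M\to SU(N)$, normalized by $g_i(x_0)=e$ at the base point, with $A^{(i)}=g_i^{-1}\,d g_i$; these $g_i$ are precisely the Wu--Yang parallel-transport factors. The first step is to pin down the monodromy relation $g_i(r,\theta+2\pi)=\phi_i\,g_i(r,\theta)$: periodicity of $A^{(i)}$ in $\theta$ forces $d\!\left[g_i(\theta+2\pi)g_i(\theta)^{-1}\right]=0$, so this product is a constant, which is exactly the holonomy $\phi_i$ around the origin.

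Next I would use two elementary facts. First, a \emph{constant} left factor does not change the connection, since $(c\,g_2)^{-1}d(c\,g_2)=g_2^{-1}d g_2=A^{(2)}$ for any fixed $c\in SU(N)$. Second, such a factor conjugates the monodromy. Setting $\widetilde g_2:=h^{-1}g_2$, one computes $\widetilde g_2(\theta+2\pi)=h^{-1}\phi_2\,g_2(\theta)=(h^{-1}\phi_2 h)\,\widetilde g_2(\theta)=\phi_1\,\widetilde g_2(\theta)$, where the hypothesis $\phi_1=h^{-1}\phi_2 h$ is used precisely to convert the monodromy of $\widetilde g_2$ into $\phi_1$, matching that of $g_1$, while still keeping $A^{(2)}=\widetilde g_2^{-1}d\widetilde g_2$.

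The gauge transformation is then $g:=g_1^{-1}\widetilde g_2=g_1^{-1}h^{-1}g_2$. The matched monodromies give $g(\theta+2\pi)=(\phi_1 g_1)^{-1}(\phi_1\widetilde g_2)=g_1^{-1}\widetilde g_2=g(\theta)$, so $g$ is $2\pi$-periodic and descends to a genuine single-valued $SU(N)$-valued gauge transformation on $M$. A one-line check closes the argument: since $g_1 g=\widetilde g_2$, one has $g^{-1}A^{(1)}g+g^{-1}d g=(g_1 g)^{-1}d(g_1 g)=\widetilde g_2^{-1}d\widetilde g_2=A^{(2)}$. Moreover $g(x_0)=h^{-1}$, consistent with the transformation law $\phi\to g^{-1}(x_0)\,\phi\,g(x_0)$ quoted before the statement.

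The one real obstacle is the global single-valuedness of $g$ on the multiply-connected base $M$: the naive candidate $g_1^{-1}g_2$ is \emph{not} periodic unless the two holonomies coincide exactly, not merely up to conjugation. The whole content of the theorem is that a constant left factor can absorb this discrepancy, and the hypothesis $\phi_1=h^{-1}\phi_2 h$ is exactly the condition under which the constant $h^{-1}$ synchronizes the two monodromies. Everything else — flatness yielding a local primitive, constancy of the monodromy, and the final gauge identity — is routine.
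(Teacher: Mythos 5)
Your proposal is correct and follows essentially the same route as the paper: pass to a simply connected cover where flatness makes both potentials pure gauge via the Wu--Yang parallel-transport factors, use the constant $h$ to synchronize the two monodromies (the paper does this by first applying the global gauge transformation $h$, you by absorbing $h^{-1}$ as a constant left factor), and check that the resulting quotient $g_1^{-1}h^{-1}g_2$ is $2\pi$-periodic and hence descends to $M$. The only differences are cosmetic (working on the full universal cover with $\theta\in\mathbb{R}$ rather than the slit strip $[0,2\pi]$, and the left- versus right-invariant convention for writing a flat connection as pure gauge).
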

\begin{proof}
	If \eqref{2.2} holds, we apply first a global gauge transformation on $A^{(2)}$ defined by $h$. Then the two loop factors are equal. Let us cut the plane along the positive $x$ axis and work over $\widetilde{M}=\{(r, \theta) \in(0, \infty) \times[0,2 \pi]\}$. Choose an $\widetilde{x}_{0} \in \widetilde{M}$ arbitrarily. Any $\widetilde{x} \in {M}$ can be joined by a $\gamma$ to $\widetilde{x}_{0}$ and the associated $\phi(\gamma)$ is actually independent of $\gamma$, for $\widetilde{M}$ is already simply connected. We get thus well-defined functions on $\widetilde{M}$
	\begin{eqnarray}
		\widetilde{g}_{j}(x)=\phi_{j}(\gamma) \quad (j=1,2). 
	\end{eqnarray}
	They satisfy $\partial_{\mu} \widetilde{g}_{j}=-A_{\mu} \widetilde{g}_{j}$. The gauge transformation over $\widetilde{M}$ defined by
    \begin{eqnarray}
	\widetilde{g}(x)=\widetilde{g}_{1}(x) \widetilde{g}_{2}^{-1}(x),
    \end{eqnarray}
	carries $A^{(1)}$ to $A^{(2)}$. It descends to $M$ if an only if $\widetilde{g}(r, 2 \pi)=\widetilde{g}(r, 0)$, $r>0$. But this happens just when $\phi_{1}=\phi_{2}$.
\end{proof}

Since any element of $S U(N)$ can be diagonalized if we conjugate it with a suitable $h \in SU(N)$, there exist a gauge where $\phi$ is diagonal:

\begin{equation}
	\phi=
	\left(
	\begin{array}{ccc}
		e^{2\pi i\alpha_1} & \ & \ \\
		\ & \ddots & \ \\
		\ & \ & e^{2\pi i\alpha_N}
	\end{array}
	\right), \label{2.5}
\end{equation}
where $\alpha_{j}$ is real and $\sum\limits^{N}\alpha_{j}=0$. The factors in the diagonal are unique up to permutation.

A gauge potential for \eqref{2.5} is found at once:
\begin{eqnarray}
	A_r=0, \quad 
	A_{\theta}=\frac{1}{i}\left(
	\begin{array}{ccc}
		\alpha_1 & \ & \ \\
		\ & \ddots & \ \\
		\ & \ & \alpha_N
	\end{array}
	\right), \quad
	A_0=0.
\end{eqnarray}

We shall refer to the specific gauge where $A_{\mu}$ has this simple form as to the ``diagonal gauge''.

The $\alpha_{j}$ is in the diagonal define $N$ independent $U(1)$ fields on $M$. We can identify them with uncoupled ``electromagnetic BA fields'' with flux $2 \pi \alpha_j$, if we take the electric charge $1$.

\section{Nucleon scattering around a Y-M flux}

Let us consider an $S U(2)$ vacuum $A_{\mu}$ on $M$ characterized by
\begin{eqnarray}
\phi=\left(\begin{array}{cc}
	e^{i 2 \pi \alpha} & 0 \\
	0 & e^{-i 2 \pi \alpha}
\end{array}\right). \label{3.1}
\end{eqnarray}

We describe the motion of a nucleon in this field by
\begin{eqnarray}
-\frac{1}{2m}\left(\partial_{j}-A_{j}\right)^{2} \psi=i \frac{\partial \psi}{\partial t}\,, 
\label{3.2} 
\end{eqnarray}
where the $SU(2)$ doublet $\psi$ is the nucleon wave function. The proton and neutron states are defined as the eigenfunctions belonging to the eigenvalues $\pm \frac{1}{2}$ of the isospin operator $\widehat{J}$. The form of $\widehat{J}$ depends on the gauge we use. We call ``natural gauge'' the one whose $\widehat{J}$ reads
\begin{eqnarray}
\widehat{J}=\frac{1}{2} \sigma_{3}=\frac{1}{2}\left(\begin{array}{cc}
	1 & 0 \\
	0 & -1
\end{array}\right) \,.
\label{3.3}
\end{eqnarray}
Under a local gauge transformation $\widehat{J}$ goes to ${g} \widehat{J} {g}^{-1}$.

Let us suppose that \eqref{3.2} stands in the natural gauge. In order to solve it, we go over to the diagonal gauge. Indeed, if the appropriate gauge transformation belongs to $g$, then $\Psi=g \psi$ satisfies
\begin{eqnarray}
\widehat{H} \Psi=\left(\begin{array}{cc}
	\widehat{H}_{\alpha} & 0 \\
	0 & \widehat{H}_{-\alpha}
\end{array}\right)\left(\begin{array}{l}
	\Psi_{1} \\
	\Psi_{2}
\end{array}\right)=i \frac{\partial}{\partial t}\left(\begin{array}{l}
	\Psi_{1} \\
	\Psi_{2}
\end{array}\right)\,,
\end{eqnarray}
where $\hat{H}_{\alpha}$ denotes
\begin{eqnarray}
\widehat{H}_{\alpha}=-\frac{1}{2m}\left(\frac{\partial^{2}}{\partial r^{2}}+\frac{1}{r} \frac{\partial}{\partial r}+\frac{1}{r^{2}}\Big(\frac{\partial}{\partial \theta}+i \alpha\Big)^{2}\right)\,. 
\label{3.5}
\end{eqnarray}

Here we recognize the Hamiltonian of a \emph{particle with charge 1 in an electromagnetic BA field \cite{c2} with flux $\phi=(-2 \pi a) / \mathrm{e} \, $}. \eqref{3.5} is thus solved by mere application of the results known from the electromagnetic case. This allows us to \underline{prove} the heuristic statements of Wu and Yang as follows:
\begin{itemize}
\item The solution of the Schr\"{o}dinger equation $\widehat{H}_{\alpha} \psi=i \partial \psi / \partial t$ is known to depend on $e^{i2\pi\alpha}$. But knowing $e^{i2\pi \alpha}$ is the same as knowing $\phi$ in \eqref{3.1}.
\end{itemize}

We conclude
\begin{proposition}
The outcome of the generalized BA experiment depends only on the Wu-Yang phase factor $\phi$.
\end{proposition}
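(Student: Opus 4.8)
The plan is to extract the experimental outcome from the asymptotics of the scattering solution of \eqref{3.2} and to show that these asymptotics are governed entirely by the eigenvalues of $\phi$. First I would fix what is measured: the interference pattern on the screen together with the proton/neutron content read off with the natural-gauge isospin \eqref{3.3}. Both are determined by the large-$r$ behaviour of the doublet $\psi$ (i.e. by the $S$-matrix), so the task reduces to showing that this behaviour depends on the vacuum only through $\phi$.

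The central step reuses the block-diagonalisation already performed. Passing to the diagonal gauge by $\Psi=g\psi$ turns \eqref{3.2} into the decoupled system with Hamiltonian $\mathrm{diag}(\widehat{H}_{\alpha},\widehat{H}_{-\alpha})$, each block being the electromagnetic Bohm--Aharonov operator \eqref{3.5}. From the known electromagnetic treatment \cite{c2,c4,c5} the scattering solution of $\widehat{H}_{\alpha}$ is an explicit function whose only dependence on the flux is through the phase $e^{2\pi i\alpha}$. Since the diagonal entries of $\phi$ in \eqref{3.1} are precisely $e^{\pm 2\pi i\alpha}$, each component $\Psi_{j}$, and hence the full $\Psi$, is fixed by $\phi$ alone.

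To convert this into a statement about the physical (natural-gauge) experiment I would use gauge covariance. Because $g\in SU(2)$ is unitary one has $|\psi|^{2}=|\Psi|^{2}$, and more generally $\langle\psi|\widehat{J}|\psi\rangle=\langle\Psi|\,g\widehat{J}g^{-1}|\Psi\rangle$, so every measured probability and isospin expectation value is computed from $\Psi$ together with the fixed conjugate operator $g\widehat{J}g^{-1}$. The measured interference pattern and isospin content therefore inherit their dependence on the vacuum solely from $\Psi$, i.e. from $\phi$. Theorem 1 supplies the consistency check: two vacua with $\phi_{1}=h^{-1}\phi_{2}h$ are gauge-equivalent and must give identical physics, in agreement with the fact that conjugate $\phi$ share the eigenvalues $e^{\pm2\pi i\alpha}$.

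The step I expect to be the main obstacle is bookkeeping the gauge dependence of the isospin operator. Unlike the scalar interference pattern, ``proton'' and ``neutron'' are meaningful only relative to a choice of $\widehat{J}$; the conjugation $g\widehat{J}g^{-1}$ need not preserve the labelling, and indeed the interchange $\alpha\leftrightarrow-\alpha$ is an $SU(2)$ conjugation that swaps the two channels. One must therefore fix the natural gauge as the laboratory frame \emph{before} asserting that the observed split into the $e^{\pm2\pi i\alpha}$ channels is the physical one. Once that convention is pinned down, the remaining estimates are routine applications of the electromagnetic results.
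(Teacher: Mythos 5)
Your proposal is correct and follows essentially the same route as the paper: pass to the diagonal gauge, observe that the equation decouples into two electromagnetic Bohm--Aharonov Hamiltonians $\widehat{H}_{\pm\alpha}$ whose solutions depend on the flux only through $e^{\pm 2\pi i\alpha}$, and note that these phases are exactly the data contained in $\phi$. Your extra bookkeeping on the gauge covariance of $\widehat{J}$ and the $\alpha\leftrightarrow-\alpha$ labelling ambiguity is a sound refinement of the paper's terser argument, not a different method.
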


The standard notions of scattering theory are introduced in the usual way \cite{c5,c10}. In the diagonal gauge the S-matrix reads obviously
\begin{eqnarray}
\widehat{S}=\left(\begin{array}{cc}
	\widehat{S}_{\alpha} & 0 \\
	0 & \widehat{S}_{-\alpha}
\end{array}\right),
\end{eqnarray}
where $\widehat{S}_{\alpha}$ is the S-matrix for the electromagnetic BA effect \footnote{Note added: $\alpha$ in the Abelian case is indeed the half of the non-Abelian one.}.

In order to derive our further results, let us summarize briefly what is known in the Abelian case \cite{c4,c5}.

We recall that in the gauge $A_{0}=0$, $A_{r}=0$, $A_{\theta}=-\alpha$ the angular momentum operator in the electromagnetic BA field reads
\begin{eqnarray}
\hat{I}=-i \frac{\partial}{\partial \theta}+\alpha\,.
\end{eqnarray}

Its spectrum is given by
\begin{eqnarray}
\lambda_{m}=m+\alpha, \quad m=0, \ \pm1, ...
\end{eqnarray}

The momentum Hilbert space $H=L^{2}\left(R^{2}, d \vec{k}\right)$ splits according to $H=\bigoplus\limits_{m=-\infty}^{\infty} H_m$ where the angular momentum subspace 
$H_m$ is defined by the projection map
\begin{eqnarray}
\left(\widehat{P}_{m} \psi\right)(k, \theta)=\frac{1}{2 \pi} e^{i m \theta} \int_{\pi}^{\pi} d \varphi e^{-i m \varphi} \psi(k, \varphi)\,.
\end{eqnarray}

In momentum representation the $S$-matrix has a particularly simple form
\begin{eqnarray}
\left.\hat{\mathbf{S}}\right|_{H_{m}}=e^{2 i \delta_{m}(\alpha)}
\end{eqnarray}
where
\begin{eqnarray}
2 \delta_{m}(\alpha)= \begin{cases}-\pi \alpha & \text { for } \lambda_{m} \geq 0 \\
\;\;  \pi \alpha & \text { for } \lambda_{m}<0\end{cases} 
\label{3.11}
\end{eqnarray}

It is the sign of $\lambda_{m}$ which determines the phase shift. \eqref{3.11} is understood intuitively by noting that the sign of $\widehat{I}$ depends on which side the particle passed by the origin. The BA effect is thus the consequence of the fact that quantum particles may be split and go simultaneously in both sides.

In the time-independent approach one works with incident plane wave $\psi_{\mathrm{inc}}=e^{i \vec{k}\cdot\vec{x}}$. The outgoing wave has the form
\begin{eqnarray}
\psi_{\mathrm{out}}=e^{i \vec{k}\cdot\vec{x}}+f(k, \theta) \cdot \frac{e^{i k r}}{r^{\frac{1}{2}}}\,,
 \nonumber
\end{eqnarray}
where
\begin{eqnarray}
f(k, \theta)=\left(\frac{1}{2 \pi k i}\right)^{1 / 2} \frac{\sin (\pi \alpha)}{\sin(\theta/2)} e^{-i\left(\frac{1}{2}+[\alpha]\right) \theta}, \quad \theta\neq0. \label{3.12}
\end{eqnarray}

Now let us return to the non-Abelian case. Suppose the incident wave is given in momentum representation. Let us split as
\begin{equation}
	\varphi_{\mathrm{inc}}=\varphi_{+}+\varphi_{-}, \quad \chi_{\mathrm{inc}}=\chi_{+}+\chi_{-}, \nonumber
\end{equation}
where $\varphi_{+},\, \chi_{+}$ and $\varphi_{-}, \,\chi_{-}$ are the positive and negative angular momentum parts. The outgoing wave function reads
\begin{eqnarray}
\psi_{\text {out }}=\left(\begin{array}{l}
	\varphi_{\text {out }} \\[4pt]
	\chi_{\text {out }}
\end{array}\right)=\widehat{S} \psi_{\mathrm{inc}}=\left(\begin{array}{c}
	e^{-i \pi \alpha}{\varphi_{+}}+e^{i \pi \alpha}{\varphi_{-}} \\[4pt]
	e^{i \pi \alpha} \chi_{+}+e^{-i \pi \alpha} \chi_{-}
\end{array}\right)\,. 
\label{3.13}
\end{eqnarray}
All statements of \cite{c1} now follow at once. Let us first suppose that the direction of $A_{\mu}$ is the same as the proton-neutron direction in isospin space. In other words the natural and the diagonal gauges are the same.

From \eqref{3.13} we conclude
\begin{proposition}
The incoming protons $(\chi=0)$ and neutrons $(\varphi=0)$ get opposite phase shifts.
\end{proposition}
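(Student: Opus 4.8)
The plan is to read the result directly off the outgoing wave function~\eqref{3.13}, which already packages the essential dynamics through the block-diagonal scattering operator $\widehat{S}=\mathrm{diag}(\widehat{S}_\alpha,\widehat{S}_{-\alpha})$ together with the Abelian phase-shift formula~\eqref{3.11}. Under the standing assumption that the natural gauge coincides with the diagonal gauge, the isospin operator is $\widehat{J}=\tfrac12\sigma_3$, so the upper component $\varphi$ is the proton eigenstate (eigenvalue $+\tfrac12$) and the lower component $\chi$ is the neutron eigenstate (eigenvalue $-\tfrac12$). The two isospin channels therefore do not mix under scattering, and the whole argument reduces to comparing the two diagonal blocks.

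First I would specialize~\eqref{3.13} to a pure proton beam, $\chi=0$. Only the upper block survives, so the state is scattered by $\widehat{S}_\alpha$ alone: by~\eqref{3.11} its positive angular-momentum part $\varphi_+$ acquires the phase $e^{-i\pi\alpha}$ and its negative part $\varphi_-$ the phase $e^{+i\pi\alpha}$. Next I would treat a pure neutron beam, $\varphi=0$, in which only the lower block survives and the state is scattered by $\widehat{S}_{-\alpha}$. Substituting $\alpha\to-\alpha$ in~\eqref{3.11} flips the sign of both shifts, so the corresponding positive part $\chi_+$ acquires $e^{+i\pi\alpha}$ and the negative part $\chi_-$ acquires $e^{-i\pi\alpha}$.

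Comparing the two channels, the neutron phases are the complex conjugates of the proton phases: in each fixed angular-momentum sector the proton picks up $2\delta_m=\mp\pi\alpha$ while the neutron picks up $\pm\pi\alpha$, i.e. $\delta^{(p)}_m=-\delta^{(n)}_m$. This is exactly the content of the two rows of~\eqref{3.13} and establishes the claim of opposite phase shifts.

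I do not expect a genuine obstacle here, precisely because the block-diagonal structure of $\widehat{S}$ decouples the proton and neutron channels completely, so there is no interference to disentangle. The only point requiring a little care is the bookkeeping of the angular-momentum decomposition when passing from $\alpha$ to $-\alpha$: the spectrum $\lambda_m=m+\alpha$ of $\widehat{I}$ becomes $m-\alpha$ in the neutron channel, so the ``positive'' and ``negative'' angular-momentum subspaces must be defined relative to this shifted spectrum. Once that is kept straight, the opposite-sign conclusion follows with no further computation.
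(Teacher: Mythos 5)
Your proposal is correct and follows essentially the same route as the paper, which simply reads the result off the outgoing wave function \eqref{3.13} (``From \eqref{3.13} we conclude''), using the block-diagonal $\widehat{S}=\mathrm{diag}(\widehat{S}_\alpha,\widehat{S}_{-\alpha})$ and the Abelian phase shifts \eqref{3.11} under the standing assumption that the natural and diagonal gauges coincide. Your added remark on tracking the shifted spectrum $\lambda_m=m-\alpha$ in the neutron channel is a correct and slightly more careful piece of bookkeeping than the paper makes explicit, but it does not change the argument.
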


Let the incident beam now be a coherent mixture of neutrons and protons in a pure state. For the sake of simplicity let us suppose that
\begin{eqnarray}
\varphi_{\mathrm{i n c}}=\chi_{\mathrm{i n c}}=\frac{1}{\sqrt{2}} e^{i\vec{k}\cdot\vec{x}}\,. 
\nonumber
\end{eqnarray}

Using \eqref{3.12} we get
\begin{proposition}
	The outgoing nucleon intensity, $\iota=|\varphi|^{2}+|\chi|^{2}$ and the outgoing mixing ratio $\mu=|\varphi| /|\chi|$ \underline{fluctuate} according to
	\begin{eqnarray}
		&&\iota=1+\frac{1}{2 \pi k r} \frac{\sin ^{2}(\pi a)}{\sin ^{2}(\theta / 2)}+\frac{\sin (\pi a)}{\sin (\theta/2)}\left(\frac{1}{2 \pi k r^{2}}\right)^{1 / 2} 
\\[12pt]
		&&\qquad \left[\cos \left(\frac{\pi}{4}+\left(\frac{1}{2}+[\alpha]\right) \theta-k r(1+\cos \theta)\right)-\cos \left(\frac{\pi}{4}+\left(\frac{1}{2}+[-\alpha]\right) \theta-k r(1+\cos \theta)\right)\right], \nonumber \\ [14pt]
		&&\mu=  \\ [12pt]
		&&\left[ \frac{(2 \pi k r)\sin^{2}\frac{\theta}{2}+\sin^{2}\pi\alpha+(2\pi kr)^{\frac{1}{2}}2\sin(\pi\alpha)\sin\left(\frac{\theta}{2}\right)\cos\left[\frac{\pi}{4}+\left(\frac{1}{2}+[\alpha]\right)\theta-kr(1+\cos\theta)\right]}
		{(2 \pi k r) \sin^{2}\frac{\theta}{2}+\sin^{2}\pi\alpha-(2\pi kr)^{\frac{1}{2}}2\sin(\pi\alpha)\sin\left(\frac{\theta}{2}\right)\cos\left[\frac{\pi}{4}+\left(\frac{1}{2}+[-\alpha]\right)\theta-kr(1+\cos\theta)\right] }\right]^{\frac{1}{2}}. \nonumber 
	\end{eqnarray}
	
\end{proposition}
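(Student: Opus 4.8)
The plan is to evaluate both observables directly from the explicit outgoing wave function, exploiting the fact that in the diagonal gauge $\widehat{S}$ is diagonal: the proton amplitude $\varphi_{\mathrm{out}}$ scatters exactly as an Abelian BA problem with flux $\alpha$, while the neutron amplitude $\chi_{\mathrm{out}}$ scatters with flux $-\alpha$. Writing the incident beam $\varphi_{\mathrm{inc}}=\chi_{\mathrm{inc}}=\tfrac{1}{\sqrt2}e^{i\vec{k}\cdot\vec{x}}$ and inserting the time-independent form together with \eqref{3.12}, I would set
\begin{eqnarray}
\varphi_{\mathrm{out}}=\tfrac{1}{\sqrt2}\left(e^{i\vec{k}\cdot\vec{x}}+f_\alpha(k,\theta)\,\tfrac{e^{ikr}}{r^{1/2}}\right),\qquad
\chi_{\mathrm{out}}=\tfrac{1}{\sqrt2}\left(e^{i\vec{k}\cdot\vec{x}}+f_{-\alpha}(k,\theta)\,\tfrac{e^{ikr}}{r^{1/2}}\right), \nonumber
\end{eqnarray}
where $f_{-\alpha}$ is read off from $f_\alpha$ by the two substitutions $\sin(\pi\alpha)\to-\sin(\pi\alpha)$ and $[\alpha]\to[-\alpha]$ visible in \eqref{3.12}. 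This is precisely the content of \eqref{3.13}: the phases $e^{\mp i\pi\alpha}$ there are what convert $f_\alpha$ into $f_{-\alpha}$ on the two components.

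The computation then reduces to expanding $|\varphi_{\mathrm{out}}|^2$ and $|\chi_{\mathrm{out}}|^2$. Each modulus squared produces three pieces: the plane-wave term $\tfrac12$; the pure scattering term $\tfrac{1}{2r}|f_{\pm\alpha}|^2=\tfrac{1}{4\pi kr}\,\tfrac{\sin^2(\pi\alpha)}{\sin^2(\theta/2)}$, which is the same for both signs since $\sin^2$ is even; and an interference term. For the interference term I would combine the phase of $f_{\pm\alpha}$, namely $-\tfrac{\pi}{4}-(\tfrac12+[\pm\alpha])\theta$ arising from $(1/i)^{1/2}=e^{-i\pi/4}$ and from the exponential in \eqref{3.12}, with the relative phase $e^{ikr}\,\overline{e^{i\vec{k}\cdot\vec{x}}}$ between the scattered and incident waves; taking the real part yields a single cosine with argument $\tfrac{\pi}{4}+(\tfrac12+[\pm\alpha])\theta-kr(1+\cos\theta)$, the sign in $1+\cos\theta$ being fixed by the kinematic convention for the outgoing wave.

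Adding the two components gives the intensity at once: the plane-wave pieces sum to $1$; the two scattering pieces add to the stated $\sin^2(\pi\alpha)/\sin^2(\theta/2)$ contribution; and because the neutron interference carries the extra minus sign from $\sin(-\pi\alpha)$, the two interference terms combine into the bracketed \emph{difference} of cosines distinguished by the integer parts $[\alpha]$ and $[-\alpha]$. For the mixing ratio I would not expand but instead form $\mu^2=|\varphi_{\mathrm{out}}|^2/|\chi_{\mathrm{out}}|^2$ and multiply numerator and denominator by $2\pi kr\,\sin^2(\theta/2)$; the common prefactor (including the overall $\tfrac12$) cancels, each term is promoted to the polynomial-in-$(2\pi kr)^{1/2}$ form displayed in the proposition, and the sign reversal of the neutron interference becomes exactly the sign difference between numerator and denominator. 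Taking the square root then produces $\mu$.

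The algebra is routine; the one place demanding care is the interference phase. The main obstacle is bookkeeping: one must retain the $e^{-i\pi/4}$ contribution to the overall phase, fix the sign of the $kr(1+\cos\theta)$ term from the outgoing convention, and above all keep straight the distinction between $[\alpha]$ and $[-\alpha]$ together with the reversal $\sin(\pi\alpha)\to-\sin(\pi\alpha)$. It is precisely the interplay of these two flux-dependent features that generates the cosine difference in $\iota$ and the asymmetry between the numerator and denominator of $\mu$, so any slip there collapses the fluctuation the proposition asserts.
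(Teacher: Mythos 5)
Your proposal is correct and takes essentially the same route as the paper, whose entire argument for this proposition is the phrase ``Using \eqref{3.12} we get'': one substitutes the Abelian amplitude with fluxes $+\alpha$ and $-\alpha$ into the two isospin components of $\tfrac{1}{\sqrt2}\bigl(e^{i\vec k\cdot\vec x}+f_{\pm\alpha}\,e^{ikr}/r^{1/2}\bigr)$ and expands the moduli. Your bookkeeping of the $e^{-i\pi/4}$ phase, the flip $\sin(\pi\alpha)\to-\sin(\pi\alpha)$, and the distinction between $[\alpha]$ and $[-\alpha]$ is exactly what produces the cosine difference in $\iota$ and the numerator/denominator asymmetry in $\mu$.
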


Let us now imagine that the field's direction differs from the proton direction. When bringing $\left(A_{\mu}\right)$ to the diagonal form $\widehat{J}$ looses its initial form \eqref{3.3}. Suppose, for example, that it becomes $\frac{1}{2} \sigma_{1}$. Protons and neutrons are represented in this case by
\begin{eqnarray}
\psi_{p}=\left(\begin{array}{l}
	\varphi \\
	\varphi
\end{array}\right) \quad \mathrm{and} \quad \psi_{n}=\left(\begin{array}{r}
	\chi \\
	-\chi
\end{array}\right). \nonumber
\end{eqnarray}

Let us scatter now a pure proton or pure neutron beam. From \eqref{3.13} we derive: 
\begin{proposition}
Some of the incoming protons (neutrons) become neutrons (protons) after scattering.
\end{proposition}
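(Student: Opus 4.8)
The plan is to compute the outgoing amplitude explicitly in the diagonal gauge and then re-expand it in the proton/neutron basis adapted to $\widehat{J}=\frac{1}{2}\sigma_{1}$. The key observation is structural: the $S$-matrix $\widehat{S}=\mathrm{diag}(\widehat{S}_{\alpha},\widehat{S}_{-\alpha})$ is diagonal with respect to $\sigma_{3}$, whereas the physical isospin is now measured with respect to $\sigma_{1}$. Since $\widehat{S}_{\alpha}\neq\widehat{S}_{-\alpha}$ (the two phase shifts differ precisely by the sign of $\alpha$), $\widehat{S}$ does not commute with $\sigma_{1}$, and this failure of commutation is exactly the content of the proposition.

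First I would take a pure incident proton beam, which in the diagonal gauge reads $\psi_{\mathrm{inc}}=(\varphi_{\mathrm{inc}},\varphi_{\mathrm{inc}})^{\top}$, and split the incident amplitude into its positive and negative angular-momentum parts, $\varphi_{\mathrm{inc}}=\varphi_{+}+\varphi_{-}$. Applying \eqref{3.13} produces the two outgoing components $e^{-i\pi\alpha}\varphi_{+}+e^{i\pi\alpha}\varphi_{-}$ and $e^{i\pi\alpha}\varphi_{+}+e^{-i\pi\alpha}\varphi_{-}$.

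Next I would project the outgoing spinor back onto the proton direction $\frac{1}{\sqrt{2}}(1,1)$ and the neutron direction $\frac{1}{\sqrt{2}}(1,-1)$. The proton projection collapses to $\sqrt{2}\cos(\pi\alpha)\,\varphi_{\mathrm{inc}}$, while the neutron projection becomes proportional to $\sin(\pi\alpha)\,(\varphi_{-}-\varphi_{+})$. The crucial point is that this cross term survives: provided $\alpha\notin\mathbb{Z}$ so that $\sin(\pi\alpha)\neq0$, and provided the incident wave genuinely mixes the two signs of angular momentum so that $\varphi_{+}\neq\varphi_{-}$, the outgoing neutron amplitude is nonzero. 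Hence a nonzero fraction of the incident protons emerges as neutrons; interchanging $\varphi\leftrightarrow\chi$ together with the relative sign gives the analogous statement for an incident neutron beam.

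The main obstacle is not the algebra but the justification that $\varphi_{+}-\varphi_{-}$ does not vanish identically for the relevant incident state. I would settle this by invoking the angular-momentum decomposition of the incident plane wave from the Abelian analysis together with the projectors $\widehat{P}_{m}$: both signs of $\lambda_{m}=m+\alpha$ are populated with unequal weight, so the positive and negative parts cannot coincide. Once this is secured the conclusion is immediate, and it is worth \emph{emphasising} that the effect is governed by $\sin(\pi\alpha)$, vanishing exactly when $\phi$ is central and the gauge field carries no genuine non-Abelian content.
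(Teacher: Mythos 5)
Your argument is correct and is essentially the paper's own: the paper likewise takes the pure proton state $(\varphi,\varphi)^{\top}$ in the diagonal gauge, applies \eqref{3.13}, and reads off that the outgoing spinor has a component along the neutron direction $(1,-1)^{\top}$ proportional to $\sin(\pi\alpha)\,(\varphi_{+}-\varphi_{-})$, which is nonzero for a generic incident wave (your extra care in checking $\varphi_{+}\neq\varphi_{-}$ via the $\widehat{P}_{m}$ decomposition is a welcome tightening of a step the paper leaves implicit). One caveat on your closing aside only: $\sin(\pi\alpha)$ vanishes exactly when $\alpha\in\mathbb{Z}$, i.e.\ when $\phi$ is the identity, not whenever $\phi$ is central --- at $\alpha=\tfrac{1}{2}$ the factor $\phi=-1$ is central yet the charge exchange is maximal, with all protons turned into neutrons, as the paper itself emphasises.
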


In particular, if all particles go in one side - we mean by that $\varphi_{-}$ or $\varphi_{+}=0$  $\left(x_{-} \ \mathrm{or} \ x_{+}=0\right)$ then in the diagonal gauge the $S$-matrix is just an $SU(2)$-rotation by angle $\mp 2 \pi \alpha$ :
\begin{eqnarray}
\widehat{S}_{ \pm}=\widehat{R}_{\mp 2 \pi \alpha}, \label{3.16}
\end{eqnarray}
where plus or minus in the left hand side refers to the sign of the angular momentum. We call this effect \underline{isospin precession}.

Remark that $\widehat{S}_{ \pm}$ in \eqref{3.16} is just the square root of the Wu-Yang factor $\phi$.
\begin{eqnarray}
\widehat{S}_{ \pm}=(\phi)^{\mp \frac{1}{2}}=\left(\begin{array}{ll}
	e^{\mp i \pi \alpha} & \\
	& e^{\pm i \pi \alpha}
\end{array}\right) \,.
\label{3.17}
\end{eqnarray}
Equation 
\eqref{3.17} is explained intuitively as follows: if our particle goes entirely on one side, its polar angle changes by $\pm \pi$. So we get the half of the shift associated to a change of $2 \pi$ -- the one corresponding to $\phi$.

If $\alpha=\frac{1}{2}$, when the electromagnetic cross section is maximal \cite{c2,c4,c5}, the S-matrix reads
\begin{eqnarray}
	&& \widehat{S}\left(\begin{array}{l}
		\varphi \\
		\varphi
	\end{array}\right)=-i\left(\begin{array}{r}
		\chi \\
		-\chi
	\end{array}\right), \quad \mathrm{ where } \quad \chi=\varphi_{+}-\varphi_{-}, 
\\[12pt]
	&& \widehat{S}\left(\begin{array}{r}
		\chi \\
		-\chi
	\end{array}\right)=-i\left(\begin{array}{l}
		\varphi \\
		\varphi
	\end{array}\right), \quad \mathrm{ where } \quad \varphi=\chi_{+}-\chi_{-}\,.
\end{eqnarray}

In this case \emph{all protons are turned into neutrons and all neutrons are turned into protons}~!

\section{Generalization to $SU(N)$ \label{sec4}}

The same experiment can be imagined also with other particles. For example, $\pi$ mesons admit an $SU(2)$ internal structure and can thus be coupled to an $SU(2)$-field. Quarks admit an $SU(3)$-structure and can be scattered on a gluon field, etc. We study here the scattering of a particle belonging to an Unitary Irreducible Representation (UIR) of $SU(N)$.

Denote $G$ the Lie algebra of the gauge group $G=SU(N)$. The generalized isospin operator $\widehat{J}$ consists now of a maximal commutative subalgebra $H$ of G. For $G=S U(N)$ $H$ has $(N-1)$ generators $\underline{H}=\left(H_{1}, \ldots H_{N-1}\right)$. When applying a gauge transformation $\underline{H}$ changes according to $\underline{H} \rightarrow g\underline{H}g^{-1}=\left(gH_{1}g^{-1}, \cdots, gH_{N-1} g^{-1} \right)$. We can thus rotate $H$ by a gauge transformation to the diagonal. The special gauge where all the $\widehat{U}\left(H_{i}\right)$'s are diagonal matrices we call again natural gauge. For example, in case of $SU(3)$ the usual choice is \cite{c11}
\begin{eqnarray}
H_{1}=\frac{1}{\sqrt{6}}\left(\begin{array}{ccc}
	1 & \ & \ \\
	\ & -1 & \ \\
	\ & \ & 0
\end{array}\right), \quad H_{2}=\frac{1}{3 \sqrt{2}}\left(\begin{array}{ccc}
1 & \ & \ \\
\ & 1 & \ \\
\ & \ & -2
\end{array}\right)
\end{eqnarray}
$H_{1}$ here is the isospin, $H_{2}$ the hypercharge operator.

An UIR is characterized by $(N-1)$ non-negative integers $P_{1} \geq \ldots P_{N-1} \geq 0$. $P_{j}$ is the length of the $\widehat{J}$ 's row in the Young tableau \cite{c11}.

Denote $\widehat{X}_{k}$ ($k=1,...,\mathrm{dim} \ SU(N)$) the generators of the UIR $\widehat{U}$. Let the dimension of the representation be $n$ $(\geq N)$.

We describe the scattering by
\begin{eqnarray}
-\frac{1}{2m}\left(\partial_{j}-A_{j}^{k} \widehat{X}_{k}\right)^{2} \psi=i \frac{\partial \psi}{\partial t} \label{4.2}
\end{eqnarray}
where $\psi$ is now an $n$-tuplet. \eqref{4.2} is again in natural gauge. Just like in the preceding $\S$, we solve it by switching to diagonal gauge
\begin{eqnarray}
	&& A_{\mu} \rightarrow g^{-1} A_{\mu} g+g^{-1} \partial_{\mu} g \nonumber 
	\\[4pt]
	&& \psi \rightarrow \widehat{U}(g) \psi=\Psi 
	\\[4pt]
	&& \widehat{U}(\underline{H}) \rightarrow \widehat{U}(g) \ \widehat{U}(\underline{H}) \  \widehat{U}\left(g^{-1}\right) \nonumber
\end{eqnarray}

We get thus $\underline{n}$ uncoupled scalar equations for the components of $\Psi$
\begin{eqnarray}
\widehat{H}_{a j} \Psi_{j}=i \frac{\partial\Psi_{j}}{\partial t}, \quad  j=1...n\,, 
\label{4.4}
\end{eqnarray}
where the new ``fluxes'' are defined by
\begin{eqnarray}
\widehat{U}(\phi)=\left(\begin{array}{ccc}
	e^{2 \pi i \alpha_{1}} &  & \\
	& \ddots & \\
	& & e^{2 \pi i \alpha_{n}}
\end{array}\right)\,.
\end{eqnarray}
Equation \eqref{4.4} is solved again by the electromagnetic solutions. As these solutions are characterized by the $e^{2 \pi i a_{j}}$'s $(j=1...n)$, we conclude
\begin{proposition}
	The outcome of the generalized BA experiment with particles belonging to the UIR $\widehat{U}$ depends on $\widehat{U}(\phi)$.
\end{proposition}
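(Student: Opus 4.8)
The plan is to reproduce, inside the representation $\widehat{U}$, the reduction that gave the earlier $SU(2)$ proposition (that the nucleon outcome depends only on the Wu-Yang factor $\phi$), now with $\phi$ replaced by its image $\widehat{U}(\phi)$. I begin from \eqref{4.2}, which is posed in the natural gauge where the Cartan generators $\widehat{U}(H_i)$ are simultaneously diagonal. Since $\widehat{U}(\phi)$ is a unitary $n\times n$ matrix it is diagonalizable by a unitary conjugation, so I would pass to the diagonal gauge using the transformation laws recorded just before \eqref{4.4}, namely $\psi\to\widehat{U}(g)\psi=\Psi$ together with $A_\mu\to g^{-1}A_\mu g+g^{-1}\partial_\mu g$ and $\widehat{U}(\underline{H})\to\widehat{U}(g)\,\widehat{U}(\underline{H})\,\widehat{U}(g^{-1})$. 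In this gauge $\widehat{U}(A_r)=\widehat{U}(A_0)=0$ and $\widehat{U}(A_\theta)$ is a constant diagonal matrix.

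With $\widehat{U}(A_\theta)$ diagonal and $\theta$-independent, the covariant Laplacian in \eqref{4.2} no longer mixes the $n$ components of $\Psi$, so the system collapses to the $n$ decoupled scalar equations \eqref{4.4}, each a copy of \eqref{3.5} with its own flux $\alpha_j$. Every $\widehat{H}_{\alpha_j}$ is exactly the electromagnetic Bohm-Aharonov Hamiltonian of \S 3, so each component is solved by the Abelian results recalled in \eqref{3.11}--\eqref{3.12}, with the $\alpha_j$ read off from the diagonal form of $\widehat{U}(\phi)$ displayed just above \eqref{4.4}.

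The final step is to note that the Abelian output is controlled by $\alpha_j$ only through $e^{2\pi i\alpha_j}$: the phase shift \eqref{3.11} is fixed by the sign of $\lambda_m$, and the scattering amplitude \eqref{3.12} enters observables through $\sin^2(\pi\alpha_j)$, both of which are invariant under $\alpha_j\to\alpha_j+\mathbb{Z}$. Hence the experimental outcome for each component depends only on $e^{2\pi i\alpha_j}$, and the full multiset $\{e^{2\pi i\alpha_1},\dots,e^{2\pi i\alpha_n}\}$ is precisely the spectrum of $\widehat{U}(\phi)$. Knowing $\widehat{U}(\phi)$ therefore fixes the outcome, which is the claim.

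The step I expect to be the main obstacle is not the decoupling but the global legitimacy of the diagonalizing gauge transformation on $M=R^{2}\backslash\{0\}$ — the very point Theorem 1 was built to settle. One must verify that the unitary which diagonalizes $\widehat{U}(\phi)$ is the image $\widehat{U}(h)$ of an honest diagonalization of $\phi$ in $SU(N)$, so that it corresponds to an admissible change of gauge rather than a mere change of basis in representation space, and that the residual ambiguities — the permutation of the $\alpha_j$ and the freedom in their integer parts — leave the decoupled spectra and hence all observables untouched. Granting this, the argument delivers the proposition, extending the $SU(2)$ case to an arbitrary UIR of $SU(N)$.
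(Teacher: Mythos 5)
Your proposal is correct and follows essentially the same route as the paper: pass to the diagonal gauge, decouple \eqref{4.2} into the $n$ scalar Abelian Bohm--Aharonov equations \eqref{4.4}, and observe that each is determined only by $e^{2\pi i\alpha_j}$, i.e.\ by the spectrum of $\widehat{U}(\phi)$. Your closing caveat about the global admissibility of the diagonalizing gauge transformation is exactly the point the paper delegates to Theorem~1, so nothing is missing.
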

 
Similarly the S-matrix reads now
\begin{eqnarray}
\widehat{S}=\left(\begin{array}{ccc}
	\widehat{S}_{\alpha_{1}} & & \\
	 & \ddots & \\
	& & \widehat{S}_{\alpha_{n}}
\end{array}\right)=\widehat{U}\left(S^{0}\right),
\end{eqnarray}
where $S^{0}$ is the S-matrix for particles belonging to the fundamental representation. If the particle travels on one side, then $\widehat{S}$ is again
\begin{eqnarray}
	\widehat{S}^{\pm}=\left(\widehat{U}(\phi)\right)^{\mp\frac{1}{2}}
	=\widehat{U}\left(\phi^{\mp\frac{1}{2}}\right)=
		\left(
	\begin{array}{ccc}
		e^{\mp i\pi \alpha_1} & \ & \ \\
		\ & \ddots & \ \\
		\ & \ & e^{\mp i\pi \alpha_n}
	\end{array}
	\right),	\label{4.7}
\end{eqnarray}
where plus and minus refer to the sign of the angular momentum.

The surprising consequence is that if $\widehat{U}$ is \underline{not faithful}, we may have the \underline{same scattering} also in \underline{different fields}! (This has been noted independently by Asorey \cite{c14} as we learnt from a preprint received after this work has been completed).

The question for which fields does it happen is answered by using some representation theory \cite{c7}.

\begin{theorem}
Let $\phi_{1}=\operatorname{diag}\left(\mathrm{e}^{2 \pi i \alpha_{j}}\right)$ and $\phi_{2}=\operatorname{diag}\left(e^{2 \pi i \beta_{j}}\right)$ characterize two $SU(N)$ vacua over M. $\widehat{U}\left(\phi_{1}\right)=\widehat{U}\left(\phi_{2}\right)$ and the outcome of the generalized $B A$ experiment is thus the same if and only if
\begin{eqnarray}
&&\phi_{1}=\varepsilon \phi_{2}, \quad \mathrm{ where } \quad \varepsilon^{N}=1\,, 
\\[8pt]
&&{\varepsilon}^{\sum\limits^{N-1}P_j}=1\,. 
\label{4.9}
\end{eqnarray}
\end{theorem}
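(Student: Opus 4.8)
The plan is to convert the matrix identity $\widehat U(\phi_1)=\widehat U(\phi_2)$ into a membership statement for the kernel of $\widehat U$, and then to pin down that kernel using two facts about $\SU(N)$: the kernel of a nontrivial irreducible representation lies in the center, and the center acts on a UIR by a scalar fixed by the number of boxes of its Young tableau. I assume $\widehat U$ nontrivial, the trivial representation being vacuous here.

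First I would note that $\phi_1,\phi_2$ are diagonal, hence commute, and that $\widehat U$ is a homomorphism, so
\[
\widehat U(\phi_1)=\widehat U(\phi_2)
\iff \widehat U\big(\phi_1\phi_2^{-1}\big)=\mathbf 1
\iff \phi_1\phi_2^{-1}\in\ker\widehat U .
\]
The problem is thereby reduced to describing $\ker\widehat U$.

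Second I would identify the kernel. As a closed normal subgroup of $\SU(N)$ its Lie algebra is an ideal of $\mathfrak{su}(N)$; since $\mathfrak{su}(N)$ is simple this ideal is $\{0\}$ or everything, and the latter is excluded because $\widehat U$ is nontrivial on the connected group $\SU(N)$. Hence $\ker\widehat U$ is discrete, and a discrete normal subgroup of a connected group is central, so $\ker\widehat U\subseteq Z(\SU(N))=\{\varepsilon\mathbf 1:\varepsilon^N=1\}$. Therefore $\phi_1\phi_2^{-1}\in\ker\widehat U$ forces $\phi_1\phi_2^{-1}=\varepsilon\mathbf 1$ with $\varepsilon^N=1$, i.e. $\phi_1=\varepsilon\phi_2$, which is the first condition of the theorem.

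Third I would evaluate the scalar by which $\varepsilon\mathbf 1$ acts. Realizing the UIR with rows $P_1\ge\cdots\ge P_{N-1}$ as the image of a Young symmetrizer inside $(\mathbb{C}^N)^{\otimes B}$, with $B=\sum_{j=1}^{N-1}P_j$ the number of boxes, the central element multiplies each of the $B$ tensor factors by $\varepsilon$ and so acts on the representation by $\varepsilon^{B}\mathbf 1$; adding a full column of $N$ boxes alters $B$ by $N$ and, since $\varepsilon^N=1$, leaves the scalar unchanged, which is why $\sum_{j=1}^{N-1}P_j$ is the relevant invariant. Consequently $\varepsilon\mathbf 1\in\ker\widehat U$ exactly when $\varepsilon^{\sum_j P_j}=1$, which is condition \eqref{4.9}; chaining this with the two equivalences above yields both directions of the theorem simultaneously. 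The routine parts are the homomorphism rewriting and the simplicity argument; the step that carries the real content, and the only place the Young-tableau hypothesis genuinely enters, is the computation of this central character, i.e. showing the scalar is precisely $\varepsilon^{B}$.
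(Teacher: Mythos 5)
Your argument is correct and complete. Note, however, that the paper itself offers no proof of this theorem: it merely remarks that the question ``is answered by using some representation theory'' and defers to reference [7] (Horv\'athy--Koll\'ar), so there is nothing in the text to compare against step by step. Your route is the natural one and almost certainly the intended one: reduce $\widehat U(\phi_1)=\widehat U(\phi_2)$ to $\phi_1\phi_2^{-1}\in\ker\widehat U$, use simplicity of the Lie algebra of $SU(N)$ plus connectedness to force the kernel of a nontrivial UIR into the center $\{\varepsilon\mathbf 1:\varepsilon^N=1\}$ (giving the first condition), and then compute the central character on the Young-symmetrizer realization inside $(\mathbb{C}^N)^{\otimes B}$ with $B=\sum_{j=1}^{N-1}P_j$ to get $\varepsilon^{B}=1$ as the kernel condition (giving the second). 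Your consistency check that adding a column of $N$ boxes shifts $B$ by $N$ and hence does not change $\varepsilon^{B}$ is exactly why the paper's normalization $P_{N}=0$ is harmless, and your exclusion of the trivial representation is the right caveat --- for $\widehat U$ trivial the ``only if'' direction literally fails, but the paper's standing assumption $n\geq N$ rules that case out. The one thing worth making explicit if this were written up: your table of consequences matches the paper's Table I (e.g.\ for the $SU(2)$ adjoint $\{3\}$, $B=2$ gives $\varepsilon=\pm1$, i.e.\ two gauge-inequivalent fluxes with identical pion scattering), which is a useful sanity check on the central-character computation.
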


This reads also
\begin{eqnarray}
e^{2 \pi i \alpha_{j}}=e^{\frac{2 \pi i}{N} \ell} e^{2 \pi i \beta_{j}}, \quad \forall j=1...N, \quad 0 \leq\ell<N,
\end{eqnarray}
and
\begin{eqnarray}
\prod^{N-1}\left(e^{2 \pi i \alpha_{j}}\right)^{P_j}=\prod^{N-1}\left(e^{2 \pi i \beta_{j}}\right)^{P_j}. 
\end{eqnarray}

The physically most interesting cases are given in Table \ref{Table1}. Note that in the $U(1)$ case no analogue of (\ref{4.9}) exists.

As an illustration let us study the scattering of \underline{pions}. The $\pi^{+}, \pi^{0}$ and $\pi^{-}$ mesons correspond to representation $\{3\}$ of $SU(2)$ \cite{c12}. Their wave functions are the eigenvectors of the 3-dimensional isospin-operator
\begin{eqnarray}
\widehat{U}(J)=\frac{1}{2}\left(\begin{array}{ccc}
	1 & & \\
	& 0 & \\
	& & -1
\end{array}\right),
\end{eqnarray}
with eigenvalues $1$, $0$ and $-1$, eigenfunctions
\begin{eqnarray}
\psi_{\pi^{+}}=\left(\begin{array}{l}
	\varphi \\
	0 \\
	0
\end{array}\right), \quad \psi_{\pi^{0}}=\left(\begin{array}{l}
	0 \\
	\chi \\
	0
\end{array}\right), \quad \psi_{n^{-}}=\left(\begin{array}{l}
	0 \\
	0 \\
	\eta
\end{array}\right).
\end{eqnarray}
$\widehat{U}(\phi)$ reads now
\begin{equation}
	\widehat{U}(\phi)=\left(\begin{array}{ccc}
		e^{4 \pi i \alpha} & \\
		& 1 & \\
		& & e^{-4 \pi i \alpha}
	\end{array}\right).
\end{equation}

We see that if the diagonal and the natural gauges are the same, then the $\pi^{+}$ or the $\pi^{-}$ mesons get opposite phase shifts while $\pi^{0}$ particles pass undisturbed.

$\widehat{U}(\phi)$ is the same for $\alpha$ and $\alpha+\frac{1}{2}$. Pions scatter hence identically, although the Y-M fields are not gauge-equivalent.

\section{Classical limit}

We would like to end this paper with some observations concerning the classical limit. Classical particles with internal structure have been introduced by Wong \cite{c3} (see also \cite{c13}). They move according to
\begin{eqnarray}
	&& m\ddot{x}_{\mu}=\Tr\left(J F_{\mu\nu}\dot{x}^{\nu}\right), \\
	&& \dot{J}=\left[J, A_{\mu} \dot{x}^{\mu}\right]. \label{5.2}
\end{eqnarray}

The internal ``classical isospin'' variable ${J}$ here belongs to the Lie algebra $G$ of the gauge group.

$F_{\mu\nu}=0$ implies that the particles travel along straight lines with constant speed. Their isospin is, however, parallel transported. The general solution of \eqref{5.2} along an arbitrary path $\gamma(t) \subset M$ reads
\begin{eqnarray}
J(t)=A d_{\phi[\gamma(t)]}J_{0}=\phi_{\gamma(t)} J_{0} \phi_{\gamma(t)}^{-1}. \label{5.3}
\end{eqnarray}
\eqref{5.3} implies the particle stays on a fixed adjoint orbit ${O}_{J_{0}}=\left\{{Ad}_{G} \ {J}_{0}\right\}$ \cite{c3,c13}.

In the diagonal gauge \eqref{5.3} has a particularly simple form. Its matrix ${J}_{j k}$ reads
\begin{eqnarray}
J_{j k}(t)=e^{i\left(\alpha_{k}-\alpha_{j}\right)[\theta(t)-\theta(0)]} J_{j k}(0), \label{5.4}
\end{eqnarray}
where $\theta$ is the polar angle. Remarkably $J$ depends on the position rather than on time.

Since the space-time motion is just the free one, these degrees of freedom can be dropped. The classical S-matrix becomes thus an $SU(N)$-rotation acting on the orbit $G_{J_{0}}$. It carries an ``incident isospin'' $J_{\mathrm{inc}}$ to
\begin{equation}
	J_{\text {out }}={Ad}_{S}{J}_{\mathrm{inc}}=S{J}_{\mathrm{inc}}{S}^{-1}.
\end{equation}

The $SU(N)$-element is found in diagonal gauge by inserting $\Delta \theta=\varphi(\infty)-\theta(-\infty)$ to \eqref{5.4}. We observe that $\Delta \theta$ is $-\pi$ or $+\pi$ depending on what side of the origin the particle passed by. This is the same as saying if the angular momentum $\vec{I}=m \vec{v} r_{0}$ (where $\vec{v}$ is the particle's speed and $r_{0}$ is the impact parameter) is negative or positive. We conclude
\begin{eqnarray}
J_{\text {out }}=
\begin{cases}
	Ad_{\mathrm{diag\left(e^{-i \pi \alpha_{j}}\right)}}\left(J_{\mathrm{inc}}\right), \quad I \geq 0 
	\\[4pt]
	Ad_{\mathrm{diag}\left(e^{i \pi \alpha_{j}}\right)}\left({J_{\mathrm{inc}}}\right), \quad I<0
\end{cases}.  \label{5.6}
\end{eqnarray}
Here we recognize again $S^{0}$, the square root of the Wu-Yang factor $\phi$. Hence \eqref{5.6}
\begin{eqnarray}
J_{\mathrm{out}}=S^{0} J_{\mathrm{inc}}\left(S^{0}\right)^{-1}.
\end{eqnarray}
\eqref{5.6} can be compared to the quantum expression \eqref{4.7} of unsplit particles.

We illustrate these results on the example of $SU(2)$. Its Lie algebra can be identified to $\mathbb{R}^{3}$; the orbits are 2-spheres. \eqref{5.2} reads now
\begin{eqnarray}
\dot{\vec{J}}=2 \vec{J} \times \vec{A}_{\theta} \,\dot{\theta}, \label{5.8}
\end{eqnarray}
where
\begin{eqnarray}
	\vec{A}_{\theta}=
	\left(
	\begin{array}{r}
		0 \\
		0 \\
		-\alpha
	\end{array}
	\right)\,.
\end{eqnarray}

The equation 
\eqref{5.8} is solved at once. The trajectories are horizontal circles: isospin undergoes a rotation by angle $-2\alpha \Delta \theta$ around the direction of the field. In the scattering process $\Delta \theta= \pm \pi$ depending on which side the particle went by. The classical $S$-matrix reads
\begin{eqnarray}
S_{\pm}=R_{\mp2\pi\alpha}\,.
\end{eqnarray}
We recover isospin precession \eqref{3.16} at the classical level!

\section{Discussion}

The basic difference between the electromagnetic and the non-Abelian situations in the observability of the phase. In the electromagnetic case this is a $U(1)$ factor which is unobservable. In the non-Abelian case however protons and neutrons differ just by a gauge rotation. \underline{Phase} is thus \underline{observable}. This explains how we get an observable effect even if our particle passes entirely on one side of the flux, without being split. Remarkably this is the part of the effect which does have a classical counterpart, where the particles cannot be split.

A last remark concerns the classical model of Wong \cite{c3,c13}. The point is that \eqref{5.2} is gauge-invariant only if $J$ changes according to ${J}\rightarrow {g}^{-1}J{g}$ under a gauge transformation. The classical isospin-trajectory is thus gauge-dependent and hence physically meaningless. This difficulty can be overcome in a sophisticated mathematical way \cite{c8} - but the intuitive appeal is lost.

Finally note that the classification scheme of Sec. \ref{sec4} holds also in the \underline{relativistic case} \cite{c7}.

\section{Acknowledgement}

1 would like to thank all those who helped me in preparing this paper. In particular, I am indebted to T. T. Wu for calling my attention at this problem, to J. Kollar for discussions, to M. Asorey for sending the his paper \cite{c14} before publication and to D. Miller for reading and correcting the manuscript. Hospitality and discussions with J. Potthoff and W. Pla{\ss} are acknowledged to the Theor. Phys. Dept. of Bielefeld University.

\bigskip\noindent\underline{Note added in 2023}. {\it This  paper was written in 1982 as  BI-TP-82/14 (available in inspire \cite{c15}) 
by following the recommendations of Tai-Tsun Wu during a visit at the CPT in Marseille. It has long remained unpublished because of a deep and relevant question of the referee of Phys. Rev. D concerning electric charge conservation. After a series of further attempts (which mostly remained unpublished), a thorougly revised and substantially different version appeared in 1986 \cite{c16}.  Yet another considerably different version \cite{c17} focusing at the classical isospin precession will be published to celebrate the 90th birthday of Professor Tai-Tsun Wu. The author is grateful to ``Tai'' for his advices  in the early eighties of the last century, and would also like to thank Q-L Zhao for his assistence.
}

\appendix
\section{Appendix}

\begin{table}[h]
	\centering
\begin{tabular}{|c|c|c|c|c|}
	\hline \begin{tabular}{l} 
		gauge \\
		group
	\end{tabular} & $P_{j}$ & \begin{tabular}{l} 
		represen- \\
		tation
	\end{tabular} & \begin{tabular}{l} 
		physical \\
		intexpretation
	\end{tabular} & \begin{tabular}{l} 
		number of flelds \\
		with the same \\
		solution
	\end{tabular} \\
	\hline \multirow{2}{*}{$\mathrm{SU}(2)$} & 1 & $\{2\}$ & nucleon & 1 \\ \cline{2-5}
	~ & 2 & $\{3\}$ & $\pi$-meson & 2 \\
	\hline \multirow{6}{*}{$\mathrm{SU}(3)$} & $1 \quad 0$ & $\{3\}$ & quark & \multirow{2}{*}{1} \\ \cline{2-4}
	~ & $ \ 1 \quad 1 \ $ & $\{\overline{3}\}$ & antiquark & \\ \cline{2-5}
	~ & $ \ 3 \quad 0 \ $ & $\{10\}$ & decimet & \multirow{2}{*}{3} \\ \cline{2-4}
	~ & $ \ 3 \quad 3 \ $ & $\{\overline{10}\}$ & antldecimet &  \\ \cline{2-5}
	~ & $ \ 2 \quad 1 \ $ & $\{8\}$ & octet & 3 \\ \cline{2-5}
	~ & $ \ 4 \quad 2 \ $ & $\{27\}$ & & 3 \\ \cline{2-5}
	\hline	
\end{tabular}
\caption{Scattering of particles with $SU(2)$ and $SU(3)$ internal structure on Yang-Mills fluxes. If the representation is not faithful, there exist several fields where the scattering is the same.}
\label{Table1}
\end{table}

\end{document}